	\theoremstyle{plain}
	\newtheorem{theorem}{Theorem}
	\newtheorem*{axiom*}{Axiom}
	\newtheorem{lemma}{Lemma}
	\theoremstyle{definition}
	\newtheorem{example}{Example}
	\theoremstyle{remark}
	\newtheorem*{remark}{Remark}
	\newtheorem*{claim*}{Claim}
\title{A Note on Invariant Extensions of Preorders}
\author{Peter Caradonna and Christopher P. Chambers}
\begin{document}
\maketitle
\begin{abstract}
    We consider the problem of extending an acyclic binary relation that is invariant under a given family of transformations into an invariant preference.  We show that when a family of transformations is \emph{commutative}, every acyclic invariant binary relation extends.  We find that, in general, the set of extensions agree on the ranking of many pairs that (i) are unranked by the original relation, and (ii) cannot be ranked by invariance or transitivity considerations alone.  We interpret these additional implications as the out-of-sample predictions generated by invariance, and study their structure. 
\end{abstract}

\medskip

\section{Introduction}

The aim of this note is to study the out-of-sample predictions generated by various models of preference under limited data.  We  specifically consider families of preferences that are \emph{invariant} or preserved under some collection of transformations of the underlying consumption space.  We are interested in the counterfactual, or out-of-sample predictions that such models of preference generate from limited or incomplete data. \medskip

Formally, given a consumption space $X$, we consider a collection $\mathcal{M}$ of transformations $X \to X$.  We are interested in the problem of when a given binary relation $\succeq$ on $X$ may be extended into a weak order that is invariant under the transformations in $\mathcal{M}$, in the sense that $x \succeq y \iff \omega(x) \succeq \omega(y)$ for all $\omega \in \mathcal{M}$.\footnote{Note that when $\mathcal{M}$ contains a single element, the identity function $\textrm{id}: X \to X$, our question reduces to the classical problem of extending a binary relation to a weak order.} \medskip

This exercise is intimately related to the problem of characterizing the empirical content of such models.  Classically, the falsification of a theory requires some subset of observations to generate implications which are observed to be false elsewhere in the data.  In the case of rational choice, it is well-understood that the only testable implications are given by the transitive closure of the revealed preference relation.  In this context, we find that, when one requires rationalizability by an \emph{invariant} preference relation, even simple, finite data sets generate significantly richer out-of-sample predictions.  \medskip

\subsection{Related Literature}

A classic work in this direction is \citet{dubra2004}.  In this framework, individual choice is over lotteries, and ``rationality'' is interpreted as the satisfaction of the independence axiom (in addition to transitivity).  Among other things, this paper establishes that satisfaction of the independence axiom on a small set of lotteries is always consistent with satisfaction of the independence axiom on the set of all lotteries.\medskip

Another classic reference in this vein is \citet{duggan}, who retains an abstract framework but imposes additional restrictions on the interpretation of ``rationality.''  He establishes that a relatively broad class of notions of rationality all lead to the same idea.  So long as rationality is satisfied on a subset of alternatives, then we cannot falsify the hypothesis of its satisfaction on the set of all alternatives.\medskip

\citet{demuynck2009} investigates a very closely related question.  His idea is to study closure operators as applied to preference relations.\footnote{See \citet{ward} for a general theory of closures.}  Roughly, associated with certain algebraic theories of preference, there is a ``smallest'' such preference satisfying that property and containing a given preorder.  He provides a general extension result for algebraic structures satisfying certain properties.  Our work also features a few closure operators, though we do not use these terms.\footnote{In particular, the transitive closure, the smallest $m$-coherent relation containing a given relation, and the lattice-theoretic join of these two closures, the smallest transitive and $m$-coherent relation containing a given relation.} \medskip

His work seems to be the first to establish that a homothetic and monotonic preorder on consumption space possessess a homothetic and monotonic weak order extension.  A direct corollary of our main result is that the same conclusion holds when monotonicity is not required.  \medskip

\citet{demuynck2009} focuses on establishing highly general results:  any property of binary relations which generates a closure can be meaningfully considered, and seeks to find minimal conditions on this closure guaranteeing extension results.  For example, he establishes results on convexity of preference, which our work is not able to discuss.  On the other hand, the general nature of the result means that for each particular algebraic structure, there is work involved in establishing his conditions are satisfied.  Our work focuses on a smaller class of algebraic properties but is able to derive results that are immediately applicable.\medskip

Moving forward, we focus on a relatively new structure which has recently been fruitfully applied to revealed preference theory:  this framework is that of \citet{freer2021,freer2022}, which builds off of \citet{demuynck2009}.  The economic structure in these works is supposed to be algebraic in nature.  The authors hypothesize that a preference satisfies a kind of generalized notion of quasilinearity.  It pays to be a bit formal here.  They imagine that there is a transformation function $m$ which maps each alternative into another alternative, and want to test rationality with respect to preferences that are ``coherent'' with respect to that transformation function.  Their conditions are of the nature of revealed preference conditions, postulating the absence of certain types of cycles.  But the technology of the general notion of $m$-coherency appears first here. \medskip

We have mentioned already the contribution of \citet{dushnik1941}.  Several authors in economics have taken interest in the Pareto representation of partial or preorders.  Abstract approaches include \citet{donaldson1998,bossert1999,weymark2000generalization,alcantud2009}. \medskip

Our work presupposes no  notion of topology, but many works in economics involving extensions take topological considerations very seriously.  \citet{aumann1962utility,aumann1964utility,peleg1970,levin} are classical references, but the theory has developed much since then.  In particular, \citet{ok2002} can now be considered a canonical reference.\footnote{See also the recent work involving further generalizations to non-transitive preferences; \emph{e.g.} \citet{evren2011multi,nishimura2016} and others.} \medskip

In concrete economic environments, similar representations can be found in, for example, the theory of expected utility preferences \citet{dubra2004,gorno2017strict}, Krepsian style  preferences over menus \citet{nehring1999multi}, or rankings of accomplishments \citet{chambersmiller2018}.

\section{Commutative Families}

Let $X$ be a set and let $\mathcal{M}$ be a non-empty collection of transformations $X \to X$.  We say that a binary relation $\succeq$ on $X$ is $\mathcal{M}$-\emph{coherent} if, for all $x,y\in X$ and $\omega \in \mathcal{M}$, 
\[
x \succeq y \quad \implies \quad \omega(x) \succeq \omega(y)
\]
and 
\[
x\succ y \quad \implies \quad \omega(x) \succ \omega(y).
\]
The relation $\succeq$ is \emph{strongly coherent} if, in addition, the converse implication obtains. We say that a relation is \emph{acyclic} if there is no $k\geq 2$ and distinct $x_1,\ldots,x_k\in X$ for which $x_1 \succeq x_2 \succeq \ldots x_k \succ x_1$. \medskip

For a binary relation $\succeq$, an \emph{extension} is a relation $\succeq'$ such that $\succeq \, \subseteq\,  \succeq’$ and $\succ\, \subseteq\,  \succ’$.  A binary relation is a \emph{preorder} if it is reflexive and transitive, and a \emph{partial order} if it is an antisymmetric preorder.  A \emph{weak order} is a complete preorder, and a \emph{linear order} is a complete partial order.\footnote{Reflexive: $x\succeq x$. Complete:  $x\succeq y$ or $y \succeq x$.  Transitive: $x \succeq y$ and $y \succeq z$ implies $x\succeq z$.  Antisymmetric: $x \succeq y$ and $y \succeq x$ implies $x = y$.}  We use the notation $x\parallel y$ when neither $x\succeq y$ nor $y \succeq x$. \medskip

We will call $\mathcal{M}$ a \emph{commutative family} if the following three hypotheses are satisfied:
\begin{enumerate}
\item[(i)] {\bf Commutativity}: For all $\omega,\omega'\in \mathcal{M}$, $\omega \circ \omega'=\omega'\circ \omega$.
\item[(ii)] {\bf Identity}: The map $\mbox{id}:X\rightarrow X$ defined by $\mbox{id}(x) = x$ is a member of $\mathcal{M}$
\item[(iii)] {\bf Closure}: For all $\omega,\omega'\in \mathcal{M}$, their composition $\omega\circ \omega'\in \mathcal{M}$.
\end{enumerate}
Given our consideration of $\mathcal{M}$-coherent binary relations, the only restrictive assumption is commutativity.  If $\mathcal{M}$ does not satisfy (ii) and (iii), but a binary relation is $\mathcal{M}$-coherent, it will remain so when $\mathcal{M}$ is augmented to include the identity and all finite compositions.  Thus it is without loss of generality to suppose that $\mathcal{M}$ is a \emph{monoid} of transformations of $X$; in this section we will suppose, in addition, that this monoid is \emph{commutative}. \medskip

Our first result says that when $\mathcal{M}$ is a commutative family, \emph{every} acyclic, $\mathcal{M}$-coherent relation admits an $\mathcal{M}$-coherent weak order extension.

\begin{theorem}\label{thm:preorderextension}Let $\mathcal{M}$ be a commutative family, and let $\succeq$ be an acyclic, $\mathcal{M}$-coherent binary relation.  Then $\succeq$ has a strongly $\mathcal{M}$-coherent weak order extension.  \end{theorem}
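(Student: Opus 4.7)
The plan is to apply a Szpilrajn-style argument via Zorn's lemma, using commutativity to preserve $\mathcal{M}$-coherence at each extension step.

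First, I would replace $\succeq$ by its transitive closure $\succeq^t$. This remains acyclic (any preorder is, in the stated sense) and $\mathcal{M}$-coherent: weak coherence follows by applying $\omega$ pointwise to chains, while strict coherence uses acyclicity---were $\omega(y) \succeq^t \omega(x)$ to accompany $x \succ^t y$, applying $\omega$ to the chain witnessing $x \succeq^t y$ together with $\mathcal{M}$-coherence of the original $\succ$ would yield a strict cycle in $\succeq$. So I may assume $\succeq$ is an $\mathcal{M}$-coherent preorder.

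Next, to secure strong coherence in the conclusion, define $x \succeq^\sharp y$ iff $\omega(x) \succeq \omega(y)$ for some $\omega \in \mathcal{M}$. Commutativity makes $\succeq^\sharp$ transitive: given witnesses $\omega_1, \omega_2$ for $x \succeq^\sharp y$ and $y \succeq^\sharp z$, applying $\omega_2$ to the first and $\omega_1$ to the second and invoking $\omega_1 \omega_2 = \omega_2 \omega_1$ brings both inequalities into the common transformation $\omega_1 \omega_2$, and transitivity of $\succeq$ closes the argument. Strong coherence is built in by composition in $\mathcal{M}$. Preservation of the strict part of $\succeq$ uses acyclicity once more: if $x \succ y$ but $\omega(y) \succeq \omega(x)$ for some $\omega$, then $\mathcal{M}$-coherence of $\succ$ gives $\omega(x) \succ \omega(y)$, contradiction. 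Hence $\succeq^\sharp$ is a strongly $\mathcal{M}$-coherent preorder extending $\succeq$.

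I would then apply Zorn's lemma to the family of strongly $\mathcal{M}$-coherent preorders extending $\succeq^\sharp$, ordered by inclusion---chains have unions as upper bounds, yielding a maximal element $\succeq^{\max}$. The crux is to show $\succeq^{\max}$ is complete. Supposing $x \parallel y$ in $\succeq^{\max}$, let $\succeq^+$ and $\succeq^-$ denote the smallest strongly $\mathcal{M}$-coherent preorders containing $\succeq^{\max} \cup \{(x,y)\}$ and $\succeq^{\max} \cup \{(y,x)\}$ respectively. If either is a proper extension (preserving the strict part of $\succeq^{\max}$), maximality is contradicted. Otherwise each collapses some strict pair, and analyzing the collapse-witnessing chains extracts from the first failure a sequence in $\succeq^{\max}$ of the form
\[
\omega_k(y) \succ^{\max} \omega_1(x),\quad \omega_j(y) \succeq^{\max} \omega_{j+1}(x) \text{ for } j = 1,\ldots,k-1,
\]
and symmetrically from the second a sequence
\[
\mu_\ell(x) \succ^{\max} \mu_1(y),\quad \mu_i(x) \succeq^{\max} \mu_{i+1}(y) \text{ for } i = 1,\ldots,\ell-1.
\]
Applying composite products of $\omega_j$'s and $\mu_i$'s to each inequality (legal by $\mathcal{M}$-coherence of $\succeq^{\max}$) and using commutativity to reorganize them to a common normalization, one concatenates the two families via transitivity of $\succeq^{\max}$ to obtain a chain that closes on itself and inherits at least one strict step---violating acyclicity of $\succeq^{\max}$.

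The main obstacle is this concluding combinatorial step: combining the two failed-extension chains into a single strict cycle in $\succeq^{\max}$. Commutativity is essential for freely rearranging products from the two transformation families $\{\omega_j\}$ and $\{\mu_i\}$ so that, after $\mathcal{M}$-coherent rescaling, the two chains align endpoint-to-endpoint and concatenate. Without commutativity, the two families cannot be interchanged and the concatenation does not close within $\succeq^{\max}$.
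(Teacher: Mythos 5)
Your proof is correct and its overall architecture coincides with the paper's: pass to the transitive closure (the paper's Lemma~1), take the strong-coherence closure $x \succeq^\sharp y \iff \exists\,\omega\in\mathcal{M}\ \omega(x)\succeq\omega(y)$ (the paper's Lemma~2 --- note your appeal to ``acyclicity'' for preservation of strict pairs is really just $\mathcal{M}$-coherence of $\succ$, but the argument you give is the right one), and run Zorn's lemma on strongly coherent preorder extensions. Where you genuinely depart from the paper is in the completeness step, and your version is arguably cleaner. The paper's Lemma~3 tests \emph{three} one-step extensions at an incomparable pair --- adjoining $w\succ z$, adjoining $z\succ w$, and adjoining $w\sim z$, each with all $\mathcal{M}$-translates --- because when the new pair is adjoined \emph{strictly} the resulting cycle may owe its strictness entirely to the adjoined relations, and only the indifference case is guaranteed to expose a strict link of the original relation; the paper then merges the three failure cycles via the bookkeeping with $A_{wz},A_{zw},A_{ww},A_{zz}$ and a large offset $\bar{\mathbf{h}}$ needed to keep exponent \emph{differences} non-negative. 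You instead adjoin the pair \emph{weakly} in each direction, so a failure automatically means a strict pair of $\succeq^{\max}$ collapses and each failure cycle carries a strict $\succeq^{\max}$ link for free; two cases then suffice, and your interleaving --- alternately applying the $k$ links $\omega_j(y)\succeq^{\max}\omega_{j+1}(x)$ translated by the $\mu_i$'s and the $\ell$ links $\mu_i(x)\succeq^{\max}\mu_{i+1}(y)$ translated by the $\omega_j$'s, closing after $\operatorname{lcm}(k,\ell)$ rounds --- uses only compositions (never differences), so no offset is needed, and commutativity makes the endpoints match. Two points you should still write out in full: first, that the smallest strongly $\mathcal{M}$-coherent preorder containing $\succeq^{\max}\cup\{(x,y)\}$ exists and that any collapse in it reduces, via the Lemma-2-type construction, to a chain in the transitive closure of $\succeq^{\max}$ together with translates of $(x,y)$ --- this is what licenses your normal form for the failure cycles; and second, the explicit $\operatorname{lcm}$ bookkeeping for the interleaved cycle, which you currently only gesture at but which is where the proof actually closes.
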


Let us illustrate the result with an example, which does not seem to exist in the literature.

\begin{example}Let $X = \mathbb{R}_+^L$ and for each $\alpha > 0$, let $\omega_{\alpha}(x) = \alpha x$.  Observe that the family $\mathcal{M}=\{\omega_{\alpha}\}_{\alpha > 0}$ is a commutative family.  A preorder $\succeq$ is $\mathcal{M}$-coherent if and only if, for every $x,y\in \mathbb{R}_+^L$ and $\alpha > 0$, we have $x \succeq y $ $\iff$ $\alpha x \succeq \alpha y$.  Such a preorder is called \emph{homothetic}.  Theorem~\ref{thm:preorderextension} demonstrates that an arbitrary homothetic preorder can be extended to a homothetic weak order; obviously this result would continue to hold whenever $X$ is a cone in any real-vector space.  \citet{demuynck2009} establishes that every monotone and homothetic preorder has a monotone and homothetic weak order extension; this is implied by our Theorem~\ref{thm:preorderextension}, by simply letting $\succeq$ be a homothetic relation that contains the usual component-wise ordering of $\mathbb{R}^L_+$.  \end{example}

\begin{example}
Consider an environment where $X$ consists of a family of \emph{dated rewards} $X = Y \times \mathbb{R}_+$ where $Y$ is a set of rewards; relevant references here include \citet{fishburn1982} and \citet{halevy2015}.  A pair $(y,t)$ represents the consumption of $y$ at date $t$.  In this setting, stationarity means that $(y,t) \succeq (y',t')$ implies $(y,t+t'')\succeq (y',t'+t'')$, with a corresponding statement for strict preference.\medskip

Thus, stationarity for dated rewards is $M$-coherency with respect to $\mathcal{M}=\{\omega_t\}_{t\geq 0}$, where $\omega_t(y,t') = (y,t'+t)$.  $\mathcal{M}$ forms a commutative family and thus by Theorem~\ref{thm:preorderextension}, every stationary preorder extends to a stationary weak order. 
\end{example}

\begin{example}We show how to demonstrate a non-topological analogue of \citet{dubra2004}, using Theorem~\ref{thm:preorderextension}.   
Let $(Y,\Sigma)$ be some measurable space and let $\Delta(Y)$ be the set of countably additive probability measures on $(Y,\Sigma)$.  Say that $\succeq$ on $\Delta(Y)$ satisfies \emph{rational independence} if for all $p,q,r\in\Delta(Y)$ and $\alpha \in \mathbb{Q}\cap (0,1]$, $p \succeq q$ if and only if $\alpha p + (1-\alpha)r \succeq \alpha q + (1-\alpha)r$.  \medskip

Extend $\succeq$ to the set of all signed measures of bounded variation as follows.  Say $\nu\succeq^* \nu'$ if and only if there is $\alpha\in\mathbb{Q}$ for which $\alpha > 0$, and $p,q\in\Delta(Y)$ satisfying $p \succeq q$ for which $(\nu-\nu')=\alpha(p-q)$.  \medskip

Observe that if $\nu-\nu'=\alpha(p-q)=\beta(r-s)$ and $p\succeq q$, then it cannot be that $s \succ r$ (by by rational independence and transitivity of $\succeq$).  In other words, $\succeq^*$ is (informally speaking) an extension of $\succeq$.\footnote{This is only informal as $\succeq^*$ and $\succeq$ are defined on different sets.}  Likewise, if $\nu_1\succeq^* \nu_2 \succeq^* \nu_3$, then it is similarly straightforward to establish that $\nu_1 \succeq^* \nu_3$.  \medskip

Now, observe that for $\succeq^*$, we have $\nu \succeq^* \nu'$ if and only if $\nu + \rho \succeq^* \nu'+\rho$, for any signed measure $\rho$.  Let $\mathcal{M}$ consist of all maps $\omega_{\rho}(\nu)=\nu+\rho$, which clearly commute across $\rho$. Thus by Theorem~\ref{thm:preorderextension}, we can extend to $\succeq'$ on all signed measures, preserving $\mathcal{M}$-coherency, as desired.  \medskip

Finally, observe that the restriction of $\succeq'$ to $\Delta(Y)$ satisfies rational independence:  if $p \succeq' q$ and $\alpha \in \mathbb{Q}\cap (0,1]$, then we must have $\alpha p \succeq' \alpha q$ by transitivity and the fact that $\succeq'$ commutes with respect to each $m_{\rho}$.\footnote{This follows from a straightforward induction argument.  For example, let us show that if $p \succeq' q$, then $(1/2)p \succeq' (1/2)q$.  If not, by completeness, $(1/2)q \succ' (1/2)p$.  Then $q = (1/2)q + (1/2)q \succ' (1/2)q + (1/2)p \succ' (1/2)q + (1/2)q$, where each $\succ'$ follows from an application of $m_{(1/2)q}$-coherency.  Transitivity then implies $q \succ' p$, a contradiction.}   Consequently,  then $\alpha p+(1-\alpha)r \succeq' \alpha q + (1-\alpha)r$, with a similar statement holding for strict preference.  \end{example}

\subsection{Counterfactual Predictions}

Even in the case of $\mathcal{M}$-coherency for commutative families, there will generally exist out-of-sample predictions that are not accounted for by simply invariance or transitivity in isolation.

\begin{example}

Let $X = \{a,b\}\times \mathbb{Z}$, with  $m(i,z) = (i,z+1)$. Define $\mathcal{M} = \{\textrm{id}, m, m\circ m, \cdots \}$, and let $\succeq$ be the preorder whose only nontrivial rankings are $(a,z)\succ (b,z+1)$ and $(a,z) \succ (b,z-1)$.  Observe that $\succeq$ is indeed $\mathcal{M}$-coherent for the commutative family $\mathcal{M}$.  Nevertheless, every $\mathcal{M}$-coherent weak order extension $\succeq^*$ must have $(a,0) \succ^* (b,0)$.   To see why, suppose by means of contradiction that there is such an extension whereby $(b,0) \succeq^* (a,0)$.  Then $(b,0) \succeq^* (a,0) \succ^* (b,1) \succeq^* (a,1) \succ^* (b,0)$.  The third ranking is by $\mathcal{M}$-coherency and the second and fourth by the extension property.  This constitutes a violation of transitivity.  Conversely, it is easy to construct $\mathcal{M}$-coherent extensions for which $(a,0) \succ^* (b,0)$.\end{example}

\section{Non-commutative Families}

When commutativity of $M$ is discarded, the conclusion of Theorem~\ref{thm:preorderextension} fails to hold.  As the next example highlights, this is due, roughly, to the possibility of having multiple, mutually inconsistent forcing collections for particular pairs of $\succeq$-incomparable alternatives.

\begin{example} Let $Z = \{a,b,a',b',x,y\}$ denote a space of prizes, and let $X = Z^{\mathbb{N}}$ denote the space of all infinite-horizon consumption streams $\sigma = (\sigma_1,\sigma_2, \ldots)$ taking values in $Z$. For each $z \in Z$, define $\omega_z(\sigma)$ to be the stream obtained by appending the prize $z$ to period one, and shifting all terms in $\sigma$ out by one period:

$$\omega_z(\sigma) = (z, \sigma_1, \sigma_2,\ldots).$$

Let $\mathcal{M}$ denote the free monoid generated by the transformations $\{\omega_z\}_{z \in Z}$.\footnote{That is, an element of $\mathcal{M}$ consists of a finite composition of the transformations $\{\omega_z\}_{z \in Z}$. We may regard such a transformation as a finite string on the alphabet $Z$.} Here, $\mathcal{M}-coherency$ coincides with the stationarity axiom of \citet{koopmans1960}.  Let $\succeq$ be the preorder consisting of the relations:
\begin{equation}\label{cex}
\begin{aligned}
\omega_a(\sigma_x) & \succ \omega_b(\sigma_y)\\
\omega_b(\sigma_x) & \succ \omega_a(\sigma_y)\\
\omega_{a'}(\sigma_x) & \succ \omega_{b'}(\sigma_y)\\
\omega_{b'}(\sigma_x) & \succ \omega_{a'}(\sigma_y)
\end{aligned}
\end{equation}
where $\sigma_x = (x,\ldots)$ and $\sigma_y = (y,\ldots)$ denote the constant $x$ and $y$ streams, as well as all forward and backward `translates' of these relations under elements of $\mathcal{M}$. This relation is trivially transitive and is $\mathcal{M}$-coherent by construction.  However, no $\mathcal{M}$-coherent extension can exist: any such extension $\succeq^*$ must specify the relation between $\sigma_x$ and $\sigma_y$; however, the first two relations in \eqref{cex} prohibit $\sigma_y \succeq^* \sigma_x$, while the latter two prohibit $\sigma_x \succeq^* \sigma_y$.  Thus while $\succeq$ is consistent with Koopmans' stationarity axiom, there is no extension of $\succeq$ to a preference relation that preserves stationarity.
\end{example}

The preceding example demonstrates that in general, a stationary preorder does not have a stationary weak order extension.

\section{Conclusion}

Future research will investigate the further structure for non-commutative families, as well as topological considerations.  Related are the papers by \citet{ok2014topological,ok2021fully}, which study related questions for groups rather than monoids.

\begin{appendix}
\section{Proof of Theorem~\ref{thm:preorderextension}}

\begin{lemma}\label{lem:transitive}Suppose that an acyclic relation satisfies $\mathcal{M}$-coherency.  Then so does its transitive closure.\end{lemma}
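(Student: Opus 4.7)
My plan is to verify both implications in the definition of $\mathcal{M}$-coherency for the transitive closure $\succeq^t$ by unwrapping each instance into a $\succeq$-chain and applying $\mathcal{M}$-coherency of $\succeq$ link by link. The weak implication is routine: any witness $x = x_0 \succeq x_1 \succeq \cdots \succeq x_n = y$ of $x \succeq^t y$ transports under $\omega \in \mathcal{M}$ to $\omega(x_0) \succeq \omega(x_1) \succeq \cdots \succeq \omega(x_n)$, so $\omega(x) \succeq^t \omega(y)$.

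The strict implication I would route through an intermediate characterization: under acyclicity of $\succeq$, I claim $x \succ^t y$ iff some chain from $x$ to $y$ contains at least one strict link. The forward direction is direct: if every link in a chain from $x$ to $y$ were weak (no $\succ$ anywhere), then each link has a reverse $x_{i+1} \succeq x_i$ by definition of strict preference, yielding $y \succeq^t x$ and contradicting $x \succ^t y$. The converse uses acyclicity: if a chain from $x$ to $y$ with a strict link coexisted with a chain from $y$ to $x$, concatenating them produces a closed walk in $\succeq$ containing a strict link, and a standard contraction argument—recurse on the first pair of repeated vertices, keeping the sub-walk that still contains the strict link—extracts a simple cycle on distinct vertices, contradicting acyclicity as stated in the paper.

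Given the characterization, the strict clause is immediate: take a chain $x = x_0 \succeq \cdots \succeq x_n = y$ witnessing $x \succ^t y$ with some strict link $x_i \succ x_{i+1}$, and apply $\omega$ to each link using $\mathcal{M}$-coherency of $\succeq$ (the weak clause on weak links, the strict clause on the strict link). The resulting chain $\omega(x_0) \succeq \cdots \succeq \omega(x_i) \succ \omega(x_{i+1}) \succeq \cdots \succeq \omega(x_n)$ carries a strict link, and the characterization yields $\omega(x) \succ^t \omega(y)$.

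The main obstacle will be the converse half of the characterization, i.e., the cycle-contraction step: given an arbitrary closed $\succeq$-walk containing a strict link, I must produce a simple cycle on distinct vertices that still carries a strict link, in order to invoke acyclicity as stated. This is the only genuine use of the acyclicity hypothesis; everything else in the argument is bookkeeping on the chain-application of $\mathcal{M}$-coherency to $\succeq$ itself.
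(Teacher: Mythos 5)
Your argument is correct and follows essentially the same route as the paper's: transport a witnessing chain under $\omega$ for the weak clause, and for the strict clause note that any chain witnessing $x \succ^T y$ must contain a strict link, which survives under $\omega$, so acyclicity rules out a return chain from $\omega(y)$ to $\omega(x)$. Your explicit walk-to-cycle contraction merely fills in a detail the paper leaves implicit when it invokes acyclicity (stated for distinct elements) against what is a priori only a closed walk.
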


\begin{proof}Let $\succeq$ be an acyclic relation satisfying $\mathcal{M}$-coherency.  Define $\succeq^T$ to be its transitive closure.  Let $x,y\in X$ for which $x\succeq^T y$.  Then there is $k$ and $x = x_1 \succeq \ldots \succeq x_k =y$.  Therefore, $\omega(x)=\omega(x_1) \succeq \ldots \succeq \omega(x_k) = \omega(y)$, so that $\omega(x) \succeq^T \omega(y)$ for any $\omega \in \mathcal{M}$.  \medskip

Now suppose that additionally it is not the case that $y\succeq^T x$.  We want to show that it is not the case that $\omega(y) \succeq^T \omega(x)$, for any $\omega \in \mathcal{M}$. \medskip

Since $y\succeq^T x$ is false and $x \succeq^T y$ is true, it follows that there is a chain $x= x_1 \succeq \ldots x_k = y$, where $x_i \succ x_{i+1}$ for some $i,i+1$ (otherwise if for all $i$, $x_i \sim x_{i+1}$, we could follow the chain back and $y \succeq^T x$).  Consequently $\omega(x_1) \succeq \ldots \succeq \omega(x_k)$, where $\omega(x_i) \succ \omega(x_{i+1})$ for some $i,i+1$.  Because $\succeq$ is acyclic, it follows that there is no chain from $\omega(x_k)$ to $\omega(x_1)$, so that in particular $\omega(x_k) \succeq^T \omega(x_1)$ is false. Since $\omega \in \mathcal{M}$ was arbitrary the result follows.\end{proof}

\begin{remark}The transitive closure of $\succeq$ is an extension of $\succeq$ if and only if $\succeq$ is acyclic.  This is a simple consequence of arguments found in \citet{richter1966revealed,richter71,hansson1968choice}, and which are based on \citet{szpilrajn1930}.  See also \citet{suzumura1976}, and \citet{chambers2016revealed}.\end{remark}

\begin{lemma}\label{lem:strongcoherent2}Suppose $\mathcal{M}$ is a commutative family.  Then every $\mathcal{M}$-coherent preorder has a strongly $\mathcal{M}$-coherent preorder extension.  
\end{lemma}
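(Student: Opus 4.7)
My plan is to construct an explicit candidate extension and verify that the commutative family axioms do all the work. Specifically, define
\[
x \succeq^{*} y \iff \exists\, \omega \in \mathcal{M} \text{ with } \omega(x) \succeq \omega(y).
\]
Because $\mathcal{M}$ contains the identity, we immediately get $\succeq\, \subseteq\, \succeq^{*}$, and reflexivity of $\succeq^{*}$ is inherited from $\succeq$. To see that $\succ$ is also preserved, suppose $x \succ y$ but $y \succeq^{*} x$; then some $\omega'$ witnesses $\omega'(y) \succeq \omega'(x)$, contradicting $\omega'(x) \succ \omega'(y)$ from $\mathcal{M}$-coherency of $\succeq$.

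The key step is transitivity of $\succeq^{*}$, which is where commutativity is used. Given $x \succeq^{*} y$ via $\omega_1$ and $y \succeq^{*} z$ via $\omega_2$, I would apply $\omega_2$-coherency of $\succeq$ to the first and $\omega_1$-coherency to the second to obtain
\[
\omega_2(\omega_1(x)) \succeq \omega_2(\omega_1(y)), \qquad \omega_1(\omega_2(y)) \succeq \omega_1(\omega_2(z)).
\]
Commutativity collapses the two middle terms and closure guarantees $\omega_1 \circ \omega_2 \in \mathcal{M}$, so transitivity of $\succeq$ then yields $x \succeq^{*} z$ with witness $\omega_1\circ\omega_2$. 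This is the main obstacle: without commutativity, the middle terms do not align, and transitivity of $\succeq^{*}$ can genuinely fail (consistent with the counterexample for non-commutative families).

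For $\mathcal{M}$-coherency of $\succeq^{*}$: given $x \succeq^{*} y$ with witness $\omega_0$, applying any $\omega \in \mathcal{M}$ and invoking commutativity turns the witness $\omega_0$ for $(x,y)$ into the witness $\omega_0$ for $(\omega(x),\omega(y))$. For the strict part, if also $\omega(y) \succeq^{*} \omega(x)$ held, composing the witness with $\omega$ would yield $y \succeq^{*} x$, contradicting $x \succ^{*} y$. Strong coherency is then almost immediate in both directions: if $\omega(x) \succeq^{*} \omega(y)$ with witness $\omega'$, then $\omega'\circ\omega \in \mathcal{M}$ is a witness for $x \succeq^{*} y$ directly from the definition; and the strict analogue follows by the same contrapositive argument combined with the coherency of $\succeq^{*}$ just established. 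Altogether, $\succeq^{*}$ is the desired strongly $\mathcal{M}$-coherent preorder extension.
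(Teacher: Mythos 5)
Your proposal is correct and follows essentially the same route as the paper: the paper defines the extension via $(\omega_1\circ\cdots\circ\omega_k)(x)\succeq(\omega_1\circ\cdots\circ\omega_k)(y)$ for a finite composition, which by the closure axiom is exactly your single-witness definition, and the subsequent verifications (transitivity via commutativity, coherency, and strong coherency) match the paper's step for step.
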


\begin{proof} 
Let $\succeq$ be $\mathcal{M}$-coherent.  Define $\succeq’$ by $x \succeq’ y$ if there exists $\omega_1,\ldots,\omega_k\in \mathcal{M}$ for which $(\omega_1\circ\ldots \omega_k)(x) \succeq (\omega_1\circ \ldots \omega_k)(y)$. 

Obviously $\succeq \, \subseteq\, \succeq'$, now suppose that $x \succ y$ and suppose by means of contradiction that $y \succeq' x$; it follows that there exist $\omega_1,\ldots,\omega_k\in \mathcal{M}$ for which $(\omega_1\circ\ldots \omega_k)(y) \succeq (\omega_1\circ \ldots\omega_k)(x)$, contradicting the fact that $\succeq$ is $\mathcal{M}$-coherent.  So $\succeq'$ is an extension of $\succeq$.

We claim that $\succeq'$ is transitive.  Suppose that $x \succeq' y \succeq' z$.  Because $x\succeq' y$, there are $\omega_1,\ldots,\omega_k\in \mathcal{M}$ for which $(\omega_1\circ \ldots \circ \omega_k)(x) \succeq (\omega_1\circ \ldots \omega_k)(y)$.  Similarly as $y \succeq' z$, there are $\omega'_1,\ldots,\omega'_l\in \mathcal{M}$ for which $(\omega'_1\circ \ldots \omega'_l)(y)\succeq (\omega'_1\circ \ldots \omega'_l)(z)$.  By coherency of $\succeq$ and by commutativity of $\mathcal{M}$, we may conclude that $(\omega_1\circ \ldots \circ \omega_k\circ \omega'_1 \circ \ldots \omega'_l)(x)\succeq (\omega_1\circ \ldots \circ \omega_k\circ \omega'_1 \circ \ldots \omega'_l)(y)$ and $(\omega_1\circ \ldots \circ \omega_k\circ \omega'_1 \circ \ldots \omega'_l)(y)\succeq (\omega_1\circ \ldots \circ \omega_k\circ \omega'_1 \circ \ldots \omega'_l)(z)$, so that $(\omega_1\circ \ldots \circ \omega_k\circ \omega'_1 \circ \ldots \omega'_l)(x)\succeq (\omega_1\circ \ldots \circ \omega_k\circ \omega'_1 \circ \ldots \omega'_l)(z)$, so that $x \succeq' z$.

We claim that $\succeq’$ is strongly coherent.  First we show that it is coherent.  Suppose that $x \succeq' y$ and let $\omega\in \mathcal{M}$.  Then there are $\omega_1,\ldots,\omega_k\in M$ for which $(\omega_1\circ\ldots \omega_k)(x) \succeq (\omega_1\circ \ldots \omega_k)(y)$.  Since $\mathcal{M}$ is a commutative family and by coherency of $\succeq$, we have $(\omega_1\circ\ldots \omega_k)\big(\omega(x)\big) \succeq (\omega_1\circ \ldots \omega_k)\big(\omega(y)\big)$, so that $\omega(x) \succeq' \omega(y)$.  Suppose that $x \succ' y$ and by means of contradiction that for some $\omega\in \mathcal{M}$, $\omega(y) \succeq' \omega(x)$.  Then by definition there are $\omega'_1 , \ldots \omega_l' \in \mathcal{M}$ for which $(\omega'_1\circ \ldots \omega'_l)\big(\omega(y)\big) \succeq (\omega'_1\circ \ldots \omega'_l)\big(\omega(x)\big)$, which by definition implies that $y  \succeq' x$, a contradiction.  This establishes that $\succeq'$ is coherent.

To see that it is strongly coherent, first suppose that $\omega(x) \succeq' \omega(y)$; then there are $\omega_1,\ldots,\omega_k\in \mathcal{M}$ for which $(\omega_1 \circ \ldots \omega_k \circ \omega)(x) \succeq (\omega_1\circ \ldots \omega_k \circ \omega)(y)$, which implies by definition that $x \succeq' y$.  Suppose in addition $\omega(y) \succeq' \omega(x)$ is false.  By means of contradiction suppose that $y \succeq' x$.  Then again there are $\omega_1,\ldots,\omega_k$ for which $(\omega_1 \circ \ldots \omega_k)(y)\succeq (\omega_1 \circ \ldots \omega_k)(x)$.  By coherence of $\succeq$, $(\omega \circ \omega_1 \ldots \omega_k)(y) \succeq (\omega\circ \omega_1 \ldots \omega_k)(x)$.  By commutativity of $\mathcal{M}$, $(\omega_1 \circ \ldots \omega_k)\big(\omega(y)\big)\succeq (\omega_1 \circ \ldots \omega_k)\big(\omega(x)\big)$, so that $\omega(y) \succeq' \omega(x)$, a contradiction.  \end{proof}

 \begin{lemma}\label{lem:acyclic2}Let $\mathcal{M}$ be a commutative family.  Let $\succeq$ be a $\mathcal{M}$-coherent preorder, and let $w,z\in X$ for which $w\parallel z$ (and in particular $w \neq z$). Then there is an acyclic, $\mathcal{M}$-coherent extension $\succeq’$ of $\succeq$ that renders $w$ and $z$ comparable.\end{lemma}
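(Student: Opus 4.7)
The plan is to reduce first to the strongly $\mathcal{M}$-coherent case using Lemma~\ref{lem:strongcoherent2}, and then to show that one of the two natural ``minimal'' candidate extensions---obtained by declaring either $w\succ z$ or $z\succ w$ and closing under $\mathcal{M}$---is already acyclic. Commutativity enters essentially at the final step, where we rule out the possibility that \emph{both} candidate extensions simultaneously fail to be acyclic.

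First, apply Lemma~\ref{lem:strongcoherent2} to extend $\succeq$ to a strongly $\mathcal{M}$-coherent preorder $\succeq^*$. If $w$ and $z$ are already comparable under $\succeq^*$, take $\succeq'=\succeq^*$ and we are done. Otherwise $w\parallel z$ in $\succeq^*$, and strong coherence forces $\omega(w)\parallel\omega(z)$ for every $\omega\in\mathcal{M}$. Define
\[
\succeq^+ \;:=\; \succeq^*\cup\{(\omega(w),\omega(z)):\omega\in\mathcal{M}\}, \qquad \succ^+ \;:=\; \succ^*\cup\{(\omega(w),\omega(z)):\omega\in\mathcal{M}\},
\]
and define $\succeq^-,\succ^-$ analogously with the roles of $w$ and $z$ swapped. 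Each is visibly an $\mathcal{M}$-coherent extension of $\succeq^*$ that renders $w,z$ comparable, so it suffices to show at least one of them is acyclic.

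Suppose for contradiction that both $\succeq^+$ and $\succeq^-$ contain cycles. Since $\succeq^*$ is itself a preorder and hence acyclic, every cycle must traverse at least one added pair; collapsing maximal $\succeq^*$-subchains by transitivity, a cycle in $\succeq^+$ can be placed in the canonical form
\[
\omega_1(w)\succ^+\omega_1(z)\succeq^*\omega_2(w)\succ^+\omega_2(z)\succeq^*\cdots\succeq^*\omega_n(w)\succ^+\omega_n(z)\succeq^*\omega_1(w),
\]
yielding the cyclic witnesses $\omega_i(z)\succeq^*\omega_{i+1\,\mathrm{mod}\,n}(w)$ in $\succeq^*$ for each $i$; similarly, a cycle in $\succeq^-$ gives $\tau_j(w)\succeq^*\tau_{j+1\,\mathrm{mod}\,m}(z)$ for some $\tau_1,\ldots,\tau_m\in\mathcal{M}$.

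The heart of the argument is to weave these two families of witnesses into a strict self-loop in $\succeq^*$, contradicting its preorder property. Using the commutativity identity $\tau_j\omega_{i+1}=\omega_{i+1}\tau_j$, applying $\tau_j$ to a cycle-1 witness and $\omega_{i+1}$ to a cycle-2 witness chains them as
\[
\tau_j\omega_i(z)\;\succeq^*\;\tau_j\omega_{i+1}(w)\;=\;\omega_{i+1}\tau_j(w)\;\succeq^*\;\omega_{i+1}\tau_{j+1}(z)\;=\;\tau_{j+1}\omega_{i+1}(z),
\]
and iterating this double-step shift returns after $\mathrm{lcm}(n,m)$ iterations to the starting element $\tau_1\omega_1(z)$. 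The main obstacle I anticipate is the ``all-weak'' case, where none of the woven $\succeq^*$-steps is individually strict: extracting a strict witness requires using strong coherence to cancel common factors. Specifically, applying $\prod_{k\neq i,i+1}\omega_k$ to the $i$-th witness yields $\pi_{i+1}(z)\succeq^*\pi_i(w)$ with $\pi_j=\prod_{k\neq j}\omega_k$, and further composition with the $\tau$-family together with repeated application of strong coherence to cancel the factors shared across the $\pi_j$'s and $\rho_l$'s should force $\omega(z)\succeq^*\omega(w)$ for some single $\omega\in\mathcal{M}$; strong coherence then collapses this to $z\succeq^*w$, contradicting $w\parallel z$. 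Commutativity is indispensable precisely in this cancellation step---without it the factors cannot be rearranged to produce a matched outer transformation, which is exactly the obstruction illustrated by the Koopmans-type counterexample in the next section.
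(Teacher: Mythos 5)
There is a genuine gap, and it sits exactly at the point you flag as the ``main obstacle'': the all-weak case cannot be repaired, because the conclusion you are trying to force there ($\omega(z)\succeq^*\omega(w)$ for a single $\omega$, hence $z\succeq^*w$ by strong coherence) is simply false in general. Concretely, take $X=\{w,z\}\times\mathbb{Z}$ with $\mathcal{M}$ generated by $\mu(x,k)=(x,k+1)$, and let $\succeq^*$ be the strongly $\mathcal{M}$-coherent preorder whose two indifference classes are $\{(w,k):k\text{ even}\}\cup\{(z,k):k\text{ odd}\}$ and $\{(z,k):k\text{ even}\}\cup\{(w,k):k\text{ odd}\}$. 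Then $(w,0)\parallel(z,0)$, yet \emph{both} of your candidates are cyclic: $\succeq^+$ contains $(w,0)\succ^+(z,0)\sim^*(w,1)\succ^+(z,1)\sim^*(w,0)$, and $\succeq^-$ contains the mirror-image cycle. So it is not true that at least one of the two strict candidates is acyclic, and no cancellation argument can extract $z\succeq^*w$ or $w\succeq^*z$ from the witnesses, because $\succeq^*$ is a perfectly consistent preorder and nothing is contradicted.

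The missing idea is a third candidate extension: declaring $\omega(w)\sim\omega(z)$ for all $\omega\in\mathcal{M}$. In the example above this indifference extension is acyclic (there are no strict edges at all) and establishes the lemma. The paper's proof accordingly assumes that all \emph{three} minimal extensions ($w\succ z$, $z\succ w$, and $w\sim z$, each closed under $\mathcal{M}$) contain cycles and weaves the three resulting cycles together. The role of the third cycle is precisely to supply strictness: since its added edges are all indifferences, that cycle must contain a strict relation lying in the original preorder, and it is this edge---not anything coming from the two strict candidates---that turns the woven chain into a forbidden strict cycle in $\succeq$. Your commutativity-based weaving of the two strict candidates is sound whenever some $\succeq^*$-witness is strict (and is close in spirit to the paper's $IJ$-fold repetition construction), but in the all-weak case you have no source of strictness, and that case genuinely occurs.
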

 
 \begin{proof}
 By appeal to commutativity, any finite string of compositions of functions in $\mathcal{M}$ may be expressed as:
 \[
 f_1^{n_1} \circ f_2^{n_2} \circ \cdots \circ f_L^{n_L},
 \]
 where $\{f_1, \ldots, f_L\} \subseteq \mathcal{M}$, and $n_1,\ldots, n_L \in \mathbb{N}$.\footnote{Note, however, that in general it will be impossible to guarantee a unique representation of this form. For example, suppose $f: X \to X$ is bijective, and $\{f, f^{-1}\}\subseteq \mathcal{M}$.}  Given such an expression, define $\mathbf{f}: \mathcal{M} \to \mathbb{N}_0$ as the unique function such that $f_l \mapsto n_l$ and $g \mapsto 0$ if and only if $g \not \in \{f_1, \ldots, f_L\}$.  
 
 Suppose now, for sake of obtaining a contradiction, that no acyclic, $\mathcal{M}$-coherent extension of $\succeq$ exists that compares $w$ and $z$.  Then every $\mathcal{M}$-coherent extension that renders $w$ and $z$ comparable contains some cycle; in particular, the minimal such extensions obtained  either by adding $w \succ' z$ and $\mathbf{f}(w) \succ' \mathbf{f}(z)$ for all $\mathbf{f}$ associated with some finite composition of elements of $\mathcal{M}$, by adding $z \succ' w$ and all $\mathbf{f}(z) \succ' \mathbf{f}(w)$, or by adding $z \sim' w$ and all $\mathbf{f}(z) \sim' \mathbf{f}(w)$, must contain some cycle.  Consider first $\succeq' \, = \, \succeq \, \cup \; \succeq^*$, where $\succeq^*$ contains all relations of the form $w \succ^* z$ and $\mathbf{f}(w) \succ^* \mathbf{f}(z)$ for all finite compositions of elements of $\mathcal{M}$, $\mathbf{f}$.  It follows there exists a cycle composed of relations of two forms:
\begin{equation}
\begin{aligned}
x \succeq \mathbf{a}^1(w) & \quad & \mathbf{a}^1(w) \succ^* \mathbf{a}^1(z)\\
\mathbf{a}^1(z) \succeq \mathbf{a}^2(w) & \quad & \mathbf{a}^2(w) \succ^* \mathbf{a}^2(z)\\
 \vdots \quad \quad \quad \quad & \quad &  \vdots \quad \quad \quad \quad \\
 \mathbf{a}^{I-1}(z) \succeq \mathbf{a}^I(w) & \quad & \mathbf{a}^I(w) \succ^* \mathbf{a}^I(z)\\
\mathbf{a}^I(z) \succeq x, & \quad & 
\end{aligned}
\end{equation}
for some $x \in X$, where the left column consists of relations in $\succeq$ and the right solely of relations in $\succeq' \setminus \succeq$.  Note that $I \ge 2$, and without loss of generality, each $\mathbf{a}^i$ is distinct.\footnote{If $I=1$, then we have $\mathbf{a}^1(z) \succeq x$ and $x \succeq \mathbf{a}^1(w)$, hence $\mathbf{a}^1(z)\succeq \mathbf{a}^1(w)$. Since $\succeq$ is strongly $\mathcal{M}$-coherent, this implies $w$ and $v$ are $\succeq$-related.}

Analogously, if $\succeq' \; = \; \succeq \; \cup \; \succeq^*$, where $\succeq^*$ contains all relations of the form $z \succ^* w$ and $\mathbf{f}(z) \succ^* \mathbf{f}(w)$ for finite compositions $\mathbf{f}$, then there exists a cycle of the form:
\begin{equation*}
\begin{aligned}
x' \succeq \mathbf{b}^1(z) & \quad & \mathbf{b}^1(z) \succ^* \mathbf{b}^1(w)\\
\mathbf{b}^1(w) \succeq \mathbf{b}^2(z) & \quad & \mathbf{b}^2(z) \succ^* \mathbf{b}^2(w)\\
 \vdots \quad \quad \quad \quad & \quad &  \vdots \quad \quad \quad \quad \\
 \mathbf{b}^{J-1}(w) \succeq \mathbf{b}^J(z) & \quad & \mathbf{b}^J(z) \succ^* \mathbf{b}^J(w)\\
\mathbf{b}^J(w) \succeq x', & \quad & 
\end{aligned}
\end{equation*}
for some $x' \in X$, where again the left column consists of relations in $\succeq$, the right solely of relations in $\succeq' \setminus \succeq$, $J \ge 2$, and each $\mathbf{b}^j$ unique.

Finally, suppose $\succeq' \; = \; \succeq \; \cup \; \succeq^*$, where $\succeq^*$ contains all relations of the form $z \sim^* w$ and $\mathbf{f}(z) \sim^* \mathbf{f}(w)$ for finite compositions $\mathbf{f}$.  By hypothesis, there is a cycle of the form:
 \begin{equation*}
% \tag{\ast}
\begin{aligned}
x'' \succeq \mathbf{c}^1(x_1) & \quad & \mathbf{c}^1(x_1) \sim^* \mathbf{c}^1(y_1)\\
\mathbf{c}^1(y_1) \succeq \mathbf{c}^2(x_2) & \quad & \mathbf{c}^2(x_2) \sim^* \mathbf{c}^2(y_2)\\
 \vdots \quad \quad \quad \quad & \quad &  \vdots \quad \quad \quad \quad \\
 \mathbf{c}^{K-1}(y_{K-1}) \succeq \mathbf{c}^J(x_K) & \quad & \mathbf{c}^K(a_K) \sim^* \mathbf{c}^J(y_K)\\
\mathbf{c}^J(y_K) \succeq x'', & \quad & 
\end{aligned}
\end{equation*}
where at least one relation in the left-hand column is strict, $K \ge 2$, each $\mathbf{c}^k$ is unique, and for all $k = 1,\ldots, K$, $\{x_k,y_k\} = \{w,z\}$.   

Define:
\begin{equation*} 
\begin{aligned}
\mathbf{p}^i & = \mathbf{a}^{i+1} - \mathbf{a}^i\\
\mathbf{q}^j & = \mathbf{b}^{j+1} - \mathbf{b}^j\\
\mathbf{r}^k & = \mathbf{c}^{k+1} - \mathbf{c}^k,\\
\end{aligned}
\end{equation*}
where we define indices $I + 1, J+1, K+1 \equiv 1$.  Note that each $\mathbf{p}^i, \mathbf{q}^j,$ and  $\mathbf{r}^k$ is not equal to the zero function $\mathbf{0}$, and:
\begin{equation*}
    \sum_{i =1}^I \mathbf{p}^i  = \sum_{j =1}^J \mathbf{q}^j = \sum_{k =1}^K \mathbf{r}^k  =  \mathbf{0}.
\end{equation*}
Consider the sets:
\begin{equation*}
\begin{aligned}
    \tilde{A}_{wz} & = \big\{\mathbf{r}^k \; \vert \; y_k = w, \; x_{k+1} = z\big\} \\
    \tilde{A}_{zw} & = \big\{\mathbf{r}^k \; \vert \; y_k = z, \; x_{k+1} = w\big\} \\
    \tilde{A}_{ww} & = \big\{\mathbf{r}^k \; \vert \; y_k = w, \; x_{k+1} = w\big\} \\
    \tilde{A}_{zz} & = \big\{\mathbf{r}^k \; \vert \; y_k = z, \; x_{k+1} = z\big\}.
\end{aligned}
\end{equation*}
Clearly these sets cover $\{\mathbf{r}^1, \ldots, \mathbf{r}^K\}$.  Define:
\begin{equation*}
\begin{aligned}
    A_{wz} & = \tilde{A}_{wz}\\
    A_{zw} & = \tilde{A}_{zw} \setminus \tilde{A}_{wz}\\
    A_{ww} & = \tilde{A}_{ww} \setminus \tilde{A}_{zw} \setminus \tilde{A}_{wz}\\
    A_{zz} &= \tilde{A}_{zz} \setminus \tilde{A}_{ww} \setminus \tilde{A}_{zw} \setminus \tilde{A}_{wz},
\end{aligned}
\end{equation*}
if these sets are non-empty, and if empty define them as $\{\mathbf{0}\}$.  By hypothesis, at least some of the sets must contain non-zero elements.
Note that each element of  $\{\mathbf{r}^1, \ldots, \mathbf{r}^K\}$ is contained in exactly one set in the collection $\{A_{wz}, A_{zw}, A_{ww}, A_{zz}\}$.  Let $\{\mathbf{s}^m_{wz}\}_{m=1}^{\vert A_{wz}\vert}$ (resp. $\{\mathbf{s}^m_{zw}\}_{m=1}^{\vert A_{zw}\vert}$, $\{\mathbf{s}^m_{ww}\}_{m=1}^{\vert A_{ww}\vert}$, and $\{\mathbf{s}^m_{zz}\}_{m=1}^{\vert A_{zz}\vert}$) denote enumerations of $A_{wz}$ (resp. $A_{zw}, A_{ww},$ and $A_{zz}$).

We now establish a contradiction, by showing that $\succeq$ contains a cycle, contrary to our hypothesis that it is a preorder.  Let $\bar{\mathbf{h}}$ denote a sufficiently large map $M \to \mathbb{N}_0$ with finite support.\footnote{Sufficiently in the sense only that each vector in the following sequence remain non-negative valued.}  

We will consider two cases in turn.

\textbf{Case 1:  $|A_{wz}|+|A_{zw}|>0$.}

To build our cycle, we first define two chains in $\succeq$ which will prove important in our construction.\footnote{The first chain indexes by $|A_{wz}|$ and the second indexes by $|A_{zw}|$; if either of these are zero, these chains are vacuous.}

\begin{equation*}
    \begin{aligned}
       \bar{\mathbf{h}}(z) & \succeq (\bar{\mathbf{h}} + \mathbf{p}^1)(w) \\
       & \succeq (\bar{\mathbf{h}} + \mathbf{p}^1 + \mathbf{s}^1_{wz})(z) \\
       & \quad \quad \quad \quad \quad \vdots \\
       & \succeq \bigg(\bar{\mathbf{h}} + \vert A_{wz} \vert \sum_{i=1}^I \mathbf{p}^i + I \sum_{m=1}^{\vert A_{wz}\vert} \mathbf{s}^m_{wz}\bigg)(z) \\
       & \quad \quad \quad \quad \quad \vdots \\
       & \succeq \bigg(\bar{\mathbf{h}} + J \, \vert A_{wz} \vert \sum_{i=1}^I \mathbf{p}^i + IJ \sum_{m=1}^{\vert A_{wz}\vert} \mathbf{s}^m_{wz}\bigg)(z).
    \end{aligned}
\end{equation*}

The first part of this chain, up to $\bigg(\bar{\mathbf{h}} + \vert A_{wz} \vert \sum_{i=1}^I \mathbf{p}^i + I \sum_{m=1}^{\vert A_{wz}\vert} \mathbf{s}^m_{wz}\bigg)(z)$ is constructed as follows.  For every $l=1,\ldots,I|A_{wz}|$, every term of the form $(\bar{\mathbf{h}}+\ldots +\mathbf{p}^l)(w)$ is followed by a term of the form $(\bar{\mathbf{h}}+\ldots+\mathbf{p}^l +\mathbf{s}_{wz}^l)(z)$, and for every $l=0,\ldots,I|A_{wz}|-1$, every term of the form $(\bar{\mathbf{h}}+\ldots +\mathbf{s}_{wz}^l)(z)$ is followed by a term of the form $(\bar{\mathbf{h}}+\ldots+\mathbf{p}^l +\mathbf{s}_{wz}^{l+1})(w)$, where an $l$ index on $\mathbf{p}$ is modulo $I$ and on $s_{wz}$ is modulo $|A_{wz}|$.

The second part of this chain, up to $\bigg(\bar{\mathbf{h}} + J \, \vert A_{wz} \vert \sum_{i=1}^I \mathbf{p}^i + IJ \sum_{m=1}^{\vert A_{wz}\vert} \mathbf{s}^m_{wz}\bigg)(z)$, follows by iterating the first $I|A_{wz}|$ steps an additional $|J|-1$ times.

Similarly, there is a chain:
\begin{equation*}
    \begin{aligned}
       \bar{\mathbf{h}}(z) & \succeq (\bar{\mathbf{h}} + \mathbf{s}^1_{zw})(w)\\
       & \succeq (\bar{\mathbf{h}} + \mathbf{s}^1_{zw}+ \mathbf{q}^1)(z) \\
       & \quad \quad \quad \quad \quad \vdots \\
       & \succeq \bigg(\bar{\mathbf{h}} + J \sum_{m=1}^{\vert A_{zw}\vert} \mathbf{s}^m_{wz} + \vert A_{zw} \vert \sum_{j=1}^J \mathbf{q}^j \bigg)(z) \\
       & \quad \quad \quad \quad \quad \vdots \\
       & \succeq \bigg(\bar{\mathbf{h}}  + I J \sum_{m=1}^{\vert A_{zw}\vert} \mathbf{s}^m_{wz} + I \, \vert A_{zw}  \vert \sum_{j=1}^J \mathbf{q}^j\bigg)(z).
    \end{aligned}
\end{equation*}

Appending these chains together then yields a chain:
\begin{equation*}
\bar{\mathbf{h}}(z) \succeq \cdots \succeq \bigg(\bar{\mathbf{h}} + I \, \vert A_{zw}  \vert \sum_{j=1}^J \mathbf{q}^j  + J \, \vert A_{wz} \vert \sum_{i=1}^I \mathbf{p}^i + IJ \sum_{m=1}^{\vert A_{wz}\vert} \mathbf{s}^m_{wz} + I J \sum_{m=1}^{\vert A_{zw}\vert} \mathbf{s}^m_{wz}\bigg)(z).
\end{equation*}

Consider now the following modification to this chain: immediately after the first instance of an $\mathbf{f}(z) \succeq \mathbf{g}(w)$ relation, apply $IJ$ applications of each transformation in $A_{ww}$.  Similarly, after the first $\mathbf{f}(w) \succeq \mathbf{g}(z)$ relation, insert $IJ$ repetitions of each transformation in $A_{zz}$.  The result is a chain:

\begin{equation*}
\bar{\mathbf{h}}(z) \succeq \cdots \succeq \bigg(\bar{\mathbf{h}} + I \, \vert A_{zw}  \vert \sum_{j=1}^J \mathbf{q}^j  + J \, \vert A_{wz} \vert \sum_{i=1}^I \mathbf{p}^i + IJ \sum_{k=1}^K \mathbf{r}^k\bigg)(z).
\end{equation*}

However, since $\sum_i \mathbf{p}^i= \sum_j \mathbf{q}^j = \sum_k \mathbf{r}^k = \mathbf{0}$, this chain is in fact a cycle.  Moreover, since every relation in the left-hand column of ($\ast$) appears in this cycle, it contains at least one strict relation, contradicting the hypothesis that $\succ$ is a preorder. Thus an $\mathcal{M}$-coherent extension of $\succeq$ that compares $w$ and $z$ exists.

\textbf{Case 2:  $|A_{wz}|+|A_{zw}|=0$.}

Follow the idea of Case 1, except here we first construct a single chain of the form:

\begin{equation*}
    \begin{aligned}
       \bar{\mathbf{h}}(z) & \succeq (\bar{\mathbf{h}} + \mathbf{p}^1)(w) \\
       & \succeq (\bar{\mathbf{h}} + \mathbf{p}^1 + \mathbf{q}^1)(z) \\
       & \quad \quad \quad \quad \quad \vdots \\
       & \succeq \bigg(\bar{\mathbf{h}} + J  \sum_{i=1}^I \mathbf{p}^i + I \sum_{j=1}^{J} \mathbf{q}^j\bigg)(z) \\
      % & \quad \quad \quad \quad \quad \vdots \\
       %& \succeq \bigg(\bar{\mathbf{h}} + J \, \vert A_{wz} \vert \sum_{i=1}^I \mathbf{p}^i + IJ \sum_{m=1}^{\vert A_{wz}\vert} \mathbf{s}^m_{wz}\bigg)(z).
    \end{aligned}
\end{equation*}

Consider now the following modification to this chain: immediately after the first instance of an $\mathbf{f}(z) \succeq \mathbf{g}(w)$ relation, apply an application of each transformation in $A_{ww}$.  Similarly, after the first $\mathbf{f}(w) \succeq \mathbf{g}(z)$ relation, insert an application of each transformation in $A_{zz}$.  The result is a chain:

\begin{equation*}
\bar{\mathbf{h}}(z) \succeq \cdots \succeq \bigg(\bar{\mathbf{h}} + I \sum_{j=1}^J \mathbf{q}^j  + J  \sum_{i=1}^I \mathbf{p}^i + \sum_{k=1}^K \mathbf{r}^k\bigg)(z).
\end{equation*}

Conclude similarly to Case 1.

\end{proof}

The remainder of the proof uses a standard transfinite induction argument.  Given is our $\mathcal{M}$-coherent preorder.  Say that $\succeq_1\rhd \succeq_2$ whenever $\succeq_1$ extends $\succeq_2$.  By Lemma~\ref{lem:strongcoherent2}, assume without loss that $\succeq^*$ is a strongly $\mathcal{M}$-coherent preorder.  First, for any chain of strongly $M$-coherent preorder extensions, $\{\succeq_{\lambda}\}_{\Lambda}$, it follows that $\overline{\succeq}=\bigcup_{\Lambda}\succeq_{\lambda}$ is also a strongly $\mathcal{M}$-coherent preorder extension.  This is standard:  If $x \succeq^* y$, then likewise $x\overline{\succeq}y$ and if in addition, $x\succ^* y$, then for no $\lambda$ is it the case that $y\succeq_{\lambda} x$.  So $x\overline{\succ}y$.  Transitivity is standard.  

Now we show that $\overline{\succeq}$ is coherent.  Suppose that $(x,y)\in\overline{\succeq}$.  Then there is $\lambda\in\Lambda$ for which $x \succeq_{\lambda}y$, since $\succeq_{\lambda}$ is $\mathcal{M}$-coherent, we conclude $\omega(x)\succeq_{\lambda}\omega(y)$ and consequently $\omega(x)\overline{\succeq}\omega(y)$.  Likewise, if $x \overline{\succ} y$, then for all $\lambda$, $(y,x)\notin \succeq_{\lambda}$, and consequently, $(\omega(y),\omega(x))\notin\succeq_{\lambda}$ by strong $\mathcal{M}$-coherency.   So $\omega(x) \overline{\succ} \omega(y)$.  

Finally, we want to show that $\overline{\succeq}$ is also strongly coherent; so let $x,y\in X$ and $\omega\in \mathcal{M}$ for which $\omega(x) \overline{\succeq}\omega(y)$.  Then there is $\lambda$ for which $\omega(x) \succeq_{\lambda}\omega(y)$ so that $x \succeq_{\lambda}y$ by strong coherency of $\succeq_{\lambda}$.  Further if $\omega(x)\overline{\succ}\omega(y)$ then for no $\lambda$ is it the case that $\omega(y) \succeq_{\lambda} \omega(x)$.  By coherency of each $\succeq_{\lambda}$, for no $\lambda$ is it the case that $y\succeq_{\lambda}x$, so that $x \overline{\succ} y$.

By Zorn's Lemma, there is a maximal strongly coherent preorder extension $\succeq’$.  We claim that it is complete. Suppose by means of contradiction that there are $z,w$ which are unranked.  By Lemma~\ref{lem:acyclic2}, there is an acyclic, $\mathcal{M}$-coherent extension of $\succeq’$ which strictly ranks $z$ and $w$.  By Lemma~\ref{lem:transitive}, the transitive closure of this extension satisfies $\mathcal{M}$-coherency, and is a preorder extension by acyclicity.  Finally by Lemma~\ref{lem:strongcoherent2} there is a strongly $\mathcal{M}$-coherent preorder extension of this relation which then ranks $z$ and $w$ and therefore extends $\succeq’$, contradicting maximality of $\succeq’$.

\end{appendix}

\bibliographystyle{ecta}
\bibliography{notes}

\begin{thebibliography}{32}
\newcommand{\enquote}[1]{``#1''}
\expandafter\ifx\csname natexlab\endcsname\relax\def\natexlab#1{#1}\fi

\bibitem[\protect\citeauthoryear{Alcantud}{Alcantud}{2009}]{alcantud2009}
\textsc{Alcantud, J.~C.} (2009): \enquote{Conditional ordering extensions,}
  \emph{Economic Theory}, 39, 495--503.

\bibitem[\protect\citeauthoryear{Aumann}{Aumann}{1962}]{aumann1962utility}
\textsc{Aumann, R.~J.} (1962): \enquote{Utility theory without the completeness
  axiom,} \emph{Econometrica}, 30, 445--462.

\bibitem[\protect\citeauthoryear{Aumann}{Aumann}{1964}]{aumann1964utility}
---\hspace{-.1pt}---\hspace{-.1pt}--- (1964): \enquote{Utility theory without
  the completeness axiom: a correction,} \emph{Econometrica}, 32, 210--212.

\bibitem[\protect\citeauthoryear{Bossert}{Bossert}{1999}]{bossert1999}
\textsc{Bossert, W.} (1999): \enquote{Intersection quasi-orderings: An
  alternative proof,} \emph{Order}, 16, 221--225.

\bibitem[\protect\citeauthoryear{Chambers and Echenique}{Chambers and
  Echenique}{2016}]{chambers2016revealed}
\textsc{Chambers, C.~P. and F.~Echenique} (2016): \emph{Revealed Preference
  Theory}, vol.~56, Cambridge University Press.

\bibitem[\protect\citeauthoryear{Chambers and Miller}{Chambers and
  Miller}{2018}]{chambersmiller2018}
\textsc{Chambers, C.~P. and A.~D. Miller} (2018): \enquote{Benchmarking,}
  \emph{Theoretical Economics}, 13, 485--504.

\bibitem[\protect\citeauthoryear{Demuynck}{Demuynck}{2009}]{demuynck2009}
\textsc{Demuynck, T.} (2009): \enquote{A general extension result with
  applications to convexity, homotheticity and monotonicity,}
  \emph{Mathematical Social Sciences}, 57, 96--109.

\bibitem[\protect\citeauthoryear{Donaldson and Weymark}{Donaldson and
  Weymark}{1998}]{donaldson1998}
\textsc{Donaldson, D. and J.~A. Weymark} (1998): \enquote{A quasiordering is
  the intersection of orderings,} \emph{Journal of Economic Theory}, 78,
  382--387.

\bibitem[\protect\citeauthoryear{Dubra, Maccheroni, and Ok}{Dubra
  et~al.}{2004}]{dubra2004}
\textsc{Dubra, J., F.~Maccheroni, and E.~A. Ok} (2004): \enquote{Expected
  utility theory without the completeness axiom,} \emph{Journal of Economic
  Theory}, 115, 118--133.

\bibitem[\protect\citeauthoryear{Duggan}{Duggan}{1999}]{duggan}
\textsc{Duggan, J.} (1999): \enquote{A general extension theorem for binary
  relations,} \emph{Journal of Economic Theory}, 86, 1--16.

\bibitem[\protect\citeauthoryear{Dushnik and Miller}{Dushnik and
  Miller}{1941}]{dushnik1941}
\textsc{Dushnik, B. and E.~W. Miller} (1941): \enquote{Partially ordered sets,}
  \emph{American Journal of Mathematics}, 63, 600--610.

\bibitem[\protect\citeauthoryear{Evren and Ok}{Evren and
  Ok}{2011}]{evren2011multi}
\textsc{Evren, {\"O}. and E.~A. Ok} (2011): \enquote{On the multi-utility
  representation of preference relations,} \emph{Journal of Mathematical
  Economics}, 47, 554--563.

\bibitem[\protect\citeauthoryear{Fishburn and Rubinstein}{Fishburn and
  Rubinstein}{1982}]{fishburn1982}
\textsc{Fishburn, P.~C. and A.~Rubinstein} (1982): \enquote{Time preference,}
  \emph{International economic review}, 23, 677--694.

\bibitem[\protect\citeauthoryear{Freer and Martinelli}{Freer and
  Martinelli}{2021}]{freer2021}
\textsc{Freer, M. and C.~Martinelli} (2021): \enquote{A utility representation
  theorem for general revealed preference,} \emph{Mathematical Social
  Sciences}, 111, 68--76.

\bibitem[\protect\citeauthoryear{Freer and Martinelli}{Freer and
  Martinelli}{2022}]{freer2022}
---\hspace{-.1pt}---\hspace{-.1pt}--- (2022): \enquote{An algebraic approach to
  revealed preference,} \emph{Economic Theory}, 1--26.

\bibitem[\protect\citeauthoryear{Gorno}{Gorno}{2017}]{gorno2017strict}
\textsc{Gorno, L.} (2017): \enquote{A strict expected multi-utility theorem,}
  \emph{Journal of Mathematical Economics}, 71, 92--95.

\bibitem[\protect\citeauthoryear{Halevy}{Halevy}{2015}]{halevy2015}
\textsc{Halevy, Y.} (2015): \enquote{Time consistency: Stationarity and time
  invariance,} \emph{Econometrica}, 83, 335--352.

\bibitem[\protect\citeauthoryear{Hansson}{Hansson}{1968}]{hansson1968choice}
\textsc{Hansson, B.} (1968): \enquote{Choice structures and preference
  relations,} \emph{Synthese}, 443--458.

\bibitem[\protect\citeauthoryear{Koopmans}{Koopmans}{1960}]{koopmans1960}
\textsc{Koopmans, T.~C.} (1960): \enquote{Stationary ordinal utility and
  impatience,} \emph{Econometrica}, 28, 287--309.

\bibitem[\protect\citeauthoryear{Levin}{Levin}{1983}]{levin}
\textsc{Levin, V.~L.} (1983): \enquote{Measurable utility theorems for closed
  and lexicographic preorders,} \emph{Soviet Mathematics Doklady}, 27,
  639--643.

\bibitem[\protect\citeauthoryear{Nehring and Puppe}{Nehring and
  Puppe}{1999}]{nehring1999multi}
\textsc{Nehring, K. and C.~Puppe} (1999): \enquote{On the multi-preference
  approach to evaluating opportunities,} \emph{Social Choice and Welfare}, 16,
  41--63.

\bibitem[\protect\citeauthoryear{Nishimura and Ok}{Nishimura and
  Ok}{2016}]{nishimura2016}
\textsc{Nishimura, H. and E.~A. Ok} (2016): \enquote{Utility representation of
  an incomplete and nontransitive preference relation,} \emph{Journal of
  Economic Theory}, 166, 164--185.

\bibitem[\protect\citeauthoryear{Ok}{Ok}{2002}]{ok2002}
\textsc{Ok, E.~A.} (2002): \enquote{Utility representation of an incomplete
  preference relation,} \emph{Journal of Economic Theory}, 104, 429--449.

\bibitem[\protect\citeauthoryear{Ok and Riella}{Ok and
  Riella}{2014}]{ok2014topological}
\textsc{Ok, E.~A. and G.~Riella} (2014): \enquote{Topological closure of
  translation invariant preorders,} \emph{Mathematics of Operations Research},
  39, 737--745.

\bibitem[\protect\citeauthoryear{Ok and Riella}{Ok and
  Riella}{2021}]{ok2021fully}
---\hspace{-.1pt}---\hspace{-.1pt}--- (2021): \enquote{Fully preorderable
  groups,} \emph{Order}, 38, 127--142.

\bibitem[\protect\citeauthoryear{Peleg}{Peleg}{1970}]{peleg1970}
\textsc{Peleg, B.} (1970): \enquote{Utility functions for partially ordered
  topological spaces,} \emph{Econometrica}, 38, 93--96.

\bibitem[\protect\citeauthoryear{Richter}{Richter}{1966}]{richter1966revealed}
\textsc{Richter, M.~K.} (1966): \enquote{Revealed preference theory,}
  \emph{Econometrica}, 34, 635--645.

\bibitem[\protect\citeauthoryear{Richter}{Richter}{1971}]{richter71}
---\hspace{-.1pt}---\hspace{-.1pt}--- (1971): \emph{Rational choice}, Harcourt
  Brace Jovanovich, New York, 29--58.

\bibitem[\protect\citeauthoryear{Suzumura}{Suzumura}{1976}]{suzumura1976}
\textsc{Suzumura, K.} (1976): \enquote{Rational choice and revealed
  preference,} \emph{The Review of Economic Studies}, 43, 149--158.

\bibitem[\protect\citeauthoryear{Szpilrajn}{Szpilrajn}{1930}]{szpilrajn1930}
\textsc{Szpilrajn, E.} (1930): \enquote{Sur l'extension de l'ordre partiel,}
  \emph{Fundamenta mathematicae}, 16, 386--389.

\bibitem[\protect\citeauthoryear{Ward}{Ward}{1942}]{ward}
\textsc{Ward, M.} (1942): \enquote{The closure operators of a lattice,}
  \emph{Annals of Mathematics}, 191--196.

\bibitem[\protect\citeauthoryear{Weymark}{Weymark}{2000}]{weymark2000generalization}
\textsc{Weymark, J.~A.} (2000): \enquote{A generalization of Moulin's Pareto
  extension theorem,} \emph{Mathematical Social Sciences}, 39, 235--240.

\end{thebibliography}
\end{document}